\newtheorem{prop}{Proposition}
\newtheorem{lem}[prop]{Lemma}
\newtheorem{cor}[prop]{Corollary}
\newtheorem{df}[prop]{Defintion}
\newtheorem{exa}[prop]{Example}
\newtheorem{rem}[prop]{Remark}
\newcommand{\etalchar}[1]{$^{#1}$}
\DeclareMathOperator{\Aut}{Aut}
\DeclareMathOperator{\End}{End}
\DeclareMathOperator{\ord}{ord}
\newcommand{\mcX}{\mathcal{X}}
\newcommand{\F}{\mathbb{F}}
\newcommand{\Z}{\mathbb{Z}}
\newcommand{\QAnode}[2]
{\K{\fbox{\tiny{$\begin{matrix}#1=\overline{0}\\ #2=\overline{2}\end{matrix}$}}}} 
\title{Efficient recovering of operation tables\\
  of black box groups and rings}
\author{
  Jens Zumbr\"agel, G\'erard Maze and Joachim Rosenthal\\
  Mathematics Institute\\ 
  University of Zurich\\
  Winterthurerstrasse 190\\
  CH - 8057 Zurich,  Switzerland\\ 
  \texttt{www.math.unizh.ch/aa}%
  \thanks{Authors were supported in part by Swiss
  National Science Foundation Grant no.\ 107887.}
}
\begin{document}

\maketitle 

\begin{abstract}
  People have been studying the following problem: Given a finite set
  $S$ with a hidden (black box) binary operation $*:S\times S\to S$
  which might come from a group law, and suppose you have access to an
  oracle that you can ask for the operation $x*y$ of single pairs
  $(x,y)\in S^2$ you choose. What is the minimal number of queries to
  the oracle until the whole binary operation is recovered, i.e. you
  know $x*y$ for all $x,y\in S$?
  
  This problem can trivially be solved by using $|S|^2$ queries to the
  oracle, so the question arises under which circumstances you can
  succeed with a significantly smaller number of queries.
  
  In this presentation we give a lower bound on the number of queries
  needed for general binary operations. On the other hand, we present
  algorithms solving this problem by using $|S|$ queries, provided
  that $*$ is an abelian group operation.  We also investigate black
  box rings and give lower and upper bounds for the number of queries
  needed to solve product recovering in this case.
\end{abstract}


\section{Introduction}

There is a considerable literature on algebraic objects whose
operations are described by a `black box'. There are different
motivations for studying such objects.

In computational group theory, \emph{black box groups} became an
important and frequently used tool.  They were introduced by Babai and
Szemer{\'e}di~\cite{ba84b} in order to study algorithms for matrix
groups. In a black box group elements are encoded as (not necessarily
unique) bitstrings and there are oracles (the black box) providing
multiplication and inversion of the encoded group elements as well as
recognition of the identity element---but often the isomorphy class
and even the order of the underlying group is unknown.  The research
is ongoing and a collection of some algorithms for black box groups
can be found in Seress' monograph~\cite[Ch. 2]{se03}.  One major
question in black box group research is the recognition problem which
asks whether a given black box group is isomorphic to a fixed finite
group like $S_n$ or $SL_n(\F_q)$ and possibly to provide an explicit
description of such an isomorphism, see e.g.~\cite{be03,br00,ka01}.
The constructive recognition problem in the case of abelian groups has
been investigated by Buchmann, Jacobson, Schmidt and
Teske~\cite{bu97a,bu05}.

Also in cryptography the use of black box groups and fields proved
themselves useful when analyzing the hardness of the discrete
logarithm problem, e.g. Shoup obtained this way lower bounds for
generic algorithms~\cite{sh97b}.  Black box fields of prime order were
used by Boneh and Lipton~\cite{bo96} when studying the discrete
logarithm problem in a presence of a Diffie-Hellman oracle.  Note that
the isomorphy class of the underlying objects are known in these
cases.

The basic question we are interested in this paper is, given a black
box group or black box ring, how many calls to the oracle are
necessary to recover the whole operation tables. To illustrate the
importance of this question we mention its relevance for estimating
the information content of an algebraic operation table and for
designing practical compression algorithms for these tables.
Furthermore, a black box object might be a crucial device in a
symmetric cryptosystem and one wishes to analyze the cost to describe
this black box completely.

We shall be interested in the problem of recovering the hidden
operation by using a minimal number of queries to the oracle.  In
algorithm analysis we neglect here the remaining computational costs,
i.e. we assume unlimited computational and storage power but limited
access to the oracle.

When investigating the product recovering problem it is natural to
assume unique encoding of group elements, since for the black box
operation to depend only on the underlying group and not on some
encoding arbitrariness this is necessary.  However we include some
remarks and comments concerning the general case of nonunique
encoding.

The organization of this paper is as follows.  In Section~\ref{one},
which forms the main part, we consider the case of one binary black
box operation. After a formalization of the problem we are able to
prove lower bounds for the general case in Subsection~\ref{lowbound}
and for some special cases in Subsection~\ref{lowboundsp}. Afterwards
we present some upper bounds and give the corresponding algorithms for
the case of abelian groups in Subsection~\ref{upbound}. Here the lower
and upper bounds are quite close together.

Finally in Section~\ref{two} we consider algebraic structures with two
binary operations.  We deal with the situation where we have a ring
with known addition but unknown multiplication.


\section{One binary operation}\label{one}

We define a black box with one binary operation in the most general
way:

\begin{df}
  A \emph{black box groupoid} is a given finite set $S$ together with
  a binary operation $*:S\times S\to S$ which is accessible by an
  oracle.  The oracle can be asked for the multiplication $x*y$ of
  single pairs $(x,y)\in S^2$.
\end{df}

The set $S$ can be thought of as a set of bitstrings and the binary
operation $*$ is the black box we only have limited access to.  

Given a black box groupoid, we are interested in the problem of
recovering the hidden operation $*$ by using a minimal number of
queries to the oracle. We also assume that some information on the
operation $*$ is available, i.e. a set $\mcX$ of possible binary
operations $*:S\times S\to S$ is given. For example, if we know that
$*$ is a group operation, then
\[\mcX_{\text{Groups}}=\{*:S\times S\to S\mid (S,*)\text{ is a group}\}.\]
  
Another example is the situation where we know that $(S,*)$ is isomorphic
to a particular groupoid $(G,\cdot)$. In this case
\begin{multline*}
  \mcX_G=\{*:S\times S\to S\mid\text{there is}\\ \text{an isomorphism
    }f:(S,*)\to(G,\cdot)\}.
\end{multline*}

Algorithms solving the product recovering problem must specify the 
appropriate set $\mcX$.

\begin{rem}
    When only the existence of an epimorphism $f:(S,*)\to (G,\cdot)$ is
    known, then we may have nonunique encoding of group(oid) elements.
    This is usually the case in black box group literature, where there
    is also an oracle for testing whether $f(x)=1$ holds. 
    However, in general we cannot exploit the
  algebraic structure of $G$ to recover exactly $*$.  
  Instead we can hope to find a subset $\tilde{S}\subseteq S$ such that
  $f|_{\tilde{S}}:\tilde{S}\to G$ is bijective and to find
  $\tilde{*}:\tilde{S}\times\tilde{S}\to\tilde{S}$ such that
  $f(a*b)=f(a\tilde{*}b)$ for all $a,b\in S$.
\end{rem}

\subsection{Query-algorithms}

We model query-algorithms as certain labeled trees with the nodes
corresponding to queries to the oracle and the edges corresponding to
its possible answers:

\begin{df} 
  A \emph{query-algorithm} with respect to a set $\mcX$ of binary
  operations $*:S\times S\to S$ on a set $S$ is a rooted tree $T$ with
  labels such that
  \begin{itemize}
  \item any node $v$ of $T$ which is not a leaf is labeled with `$x*y$'
    where $x,y\in S$ (to be thought of as a query), leaves are
    unlabeled,
  \item the branches to the children of $v$ are labeled with `$z$'
    with elements $z\in S$ (to be thought of as possible answers),
    such that different branches have different labels.
  \end{itemize}
  
  Furthermore we require \emph{completeness of answers} in the
  following sense. For every possible binary operation $*\in\mcX$
  there exists a corresponding path $(v_0,\dots,v_k)$ from the root
  $v_0$ to a leaf $v_k$ such that if $v_i$ is labeled with `$x_i*y_i$'
  then the branch $(v_i,v_{i+1})$ is labeled with `$z_i$' where
  $z_i=x_i*y_i$, for $0\leq i<k$.
  
  The leaf $L(*)=:v_k$ is then uniquely determined by this property,
  so that there is a well-defined map \[L:\mcX\to\{\text{leaves of
    }T\}.\] The query-algorithm $T$ is said to \emph{solve
    product-recovering} if this map $L$ is bijective, i.e. there is a
  one-one correspondence between the leaves of $T$ and the operations
  $*\in\mcX$.
\end{df}


\begin{figure}
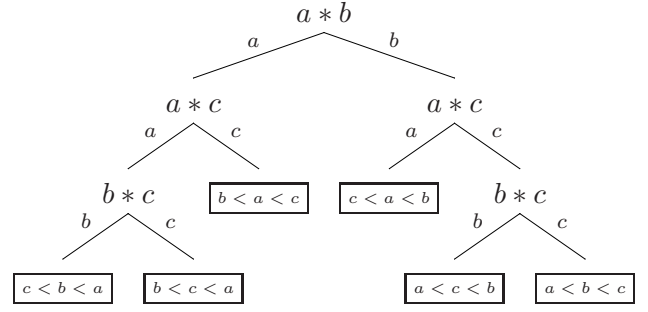

  \centering \Tree[-0.5]{
    & & & & \K{$a*b$}\B{dll}_a\B{drr}^{b} \\
    & & \K{$a*c$}\B{dl}_a\B{dr}^{c} & & & & \K{$a*c$}\B{dl}_a\B{dr}^{c} \\
    & \K{$b*c$}\B{dl}_b\B{dr}^c & & \K{\fbox{\tiny{$b<a<c$}}} &
    & \K{\fbox{\tiny{$c<a<b$}}} & & \K{$b*c$}\B{dl}_b\B{dr}^c \\
    \K{\fbox{\tiny{$c<b<a$}}} & & \K{\fbox{\tiny{$b<c<a$}}} & & &
    & \K{\fbox{\tiny{$a<c<b$}}} & & \K{\fbox{\tiny{$a<b<c$}}}}
   \caption{A query-algorithm for a totally ordered set $\{a,b,c\}$}
   \label{figmax_3}
\end{figure}

\begin{exa}\label{examax_3}
  Let $\mcX$ be the set of all binary operations $*$ on a
  three-element-set $S=\{a,b,c\}$ such that $(S,*)$ is isomorphic to
  the semigroup $(\{0,1,2\},\max)$. Thus, there is an unknown total
  ordering of the elements of $S$, and the problem of
  product-recovering is equivalent to find back this ordering. 
  
  A query-algorithm solving product-recovering is shown in
  Fig.~\ref{figmax_3}. Every leaf $v$ is labeled with the ordering
  which corresponds to the binary operation $*_v=L^{-1}(v)$.
\end{exa}

\begin{figure*}
  \centering \Tree{
    & & & & & \K{$a*a$}\B{dllll}_a\B{dl}^b\B{drr}_c\B{drrrrr}^d \\
    & \K{$b*b$}\B{dl}_a\B{d}_c\B{dr}^d & & &
    \K{$a*b$}\B{dl}_a\B{d}_c\B{dr}^d & & &
    \K{$a*c$}\B{dl}_a\B{d}^b\B{dr}^d & &
    & \K{$a*d$}\B{dl}_a\B{d}^b\B{dr}^c & \\
    \QAnode{a}{b} & \QAnode{a}{c} & \QAnode{a}{d} & \QAnode{b}{a}
    & \QAnode{d}{b} & \QAnode{c}{b} & \QAnode{c}{a} &
    \QAnode{d}{c} & \QAnode{b}{c} & \QAnode{d}{a} & \QAnode{c}{d}
    & \QAnode{b}{d} }
  \caption{A query-algorithm for the group 
    $\Z_4=\{\overline{0},\overline{1},\overline{2},\overline{3}\}$}
  \label{figz_4}
\end{figure*}

\begin{exa}\label{exaz_4}
  Let $(G,\cdot)=(\Z_4,+)$ be the cyclic group of order $4$ and let
  $S=\{a,b,c,d\}$. Fig.~\ref{figz_4} shows a query-algorithm $T$
  solving product-recovering for this group, i.e. with respect to
  $\mcX_G$.
  
  The leaves $v$ are labeled with a shortened presentation of the
  binary operation $*_v=L^{-1}(v)$. Note that the elements
  $\overline{1}$ and $\overline{3}$ of $\Z_4$ are exchangeable, so we
  do not have to specify their corresponding elements in $S$.
 
  Since $T$ has 12 leaves we have
  $|\mcX_G|=12=\frac{4!}{|\Aut(\Z_4)|}$, which also follows from
  Lemma~\ref{lemx_g} below.
\end{exa}


\subsection{Lower bounds}\label{lowbound}

The next lemma establishes a general lower bound.

\begin{lem}\label{lemavg}
  Let $T$ be any query-algorithm which solves product-recovering with
  respect to a set of $\mcX$ of binary operations on a set $S$, and
  let $N$ be its number of queries to the oracle.  Assuming a uniform
  distribution on $\mcX$ we have for the expectation
  \[E(N)\geq\log_{|S|}|\mcX|.\]
  (We say $T$ needs at least $\log_{|S|}|\mcX|$ queries
  \emph{on average}.)
\end{lem}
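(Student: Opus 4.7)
The plan is to prove this by the standard information-theoretic decision-tree argument: Kraft's inequality plus Jensen's inequality applied to the convex function $t\mapsto |S|^{-t}$.

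First I would unpack the combinatorics of the tree $T$. Since every branch out of an internal node is labeled by an element $z\in S$ (the answer $x*y$) and different branches carry different labels, every internal node has at most $|S|$ children, so $T$ is a tree of branching factor at most $|S|$. Because $T$ solves product-recovering, the map $L:\mcX\to\{\text{leaves of }T\}$ is a bijection, so $T$ has exactly $|\mcX|$ leaves. For each $*\in\mcX$, the depth of $L(*)$ equals the number $N(*)$ of queries the algorithm makes when the hidden operation is $*$.

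Next I would invoke Kraft's inequality for trees of branching factor $|S|$: for any such tree,
\[
\sum_{v\text{ leaf of }T} |S|^{-\operatorname{depth}(v)} \;\leq\; 1.
\]
(This is a trivial induction on the tree: each internal node contributes at most $|S|$ children whose weights $|S|^{-(d+1)}$ sum to at most $|S|^{-d}$.) Reindexing leaves by $\mcX$ via $L$ gives
\[
\sum_{*\in\mcX} |S|^{-N(*)} \;\leq\; 1.
\]

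The final step is Jensen's inequality for the convex function $\varphi(t)=|S|^{-t}$ together with the uniform distribution on $\mcX$:
\[
|S|^{-E(N)} \;=\; \varphi\!\Bigl(\tfrac{1}{|\mcX|}\sum_{*\in\mcX} N(*)\Bigr) \;\leq\; \tfrac{1}{|\mcX|}\sum_{*\in\mcX} |S|^{-N(*)} \;\leq\; \tfrac{1}{|\mcX|}.
\]
Rearranging yields $|S|^{E(N)}\geq |\mcX|$, i.e., $E(N)\geq \log_{|S|}|\mcX|$, as required.

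There is no real obstacle here; the only point worth double-checking is that the branching factor really is $|S|$ and not something larger. This is guaranteed by the definition of a query-algorithm, which demands that distinct branches out of the same node carry distinct labels drawn from $S$. The argument is essentially Shannon's noiseless source coding bound applied to the $|\mcX|$-valued random variable $*$, with codewords given by the root-to-leaf paths in $T$.
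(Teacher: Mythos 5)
Your proof is correct and takes essentially the same route as the paper's: the paper simply notes that $E(N)$ is the average depth of the leaves of a tree with at most $|S|$ children per node and exactly $|\mcX|$ leaves and asserts the bound, while your Kraft-plus-Jensen argument is the standard rigorous justification of precisely that assertion. No gap to report.
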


\begin{proof} 
  $E(N)$ is the average height of all leaves of the tree $T$.  Now any
  node of $T$ has at most $|S|$ children and $T$ has exactly $|\mcX|$
  leaves. This yields the result.
\end{proof}

\begin{lem}\label{lemx_g}
  For any groupoid $(G,\cdot)$ with $|G|=n$ we have
  \[|\mcX_G|=\frac{n!}{|\Aut(G,\cdot)|}.\]
\end{lem}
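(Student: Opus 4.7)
The plan is to count the set $B$ of bijections $f : S \to G$ in two different ways. Since $\mcX_G$ is empty unless $|S|=|G|$, we may assume $|S|=n$, giving $|B|=n!$.

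First I would set up a surjection $\Phi : B \to \mcX_G$ by pulling back the operation: for $f\in B$, define
\[ x *_f y \;:=\; f^{-1}\bigl(f(x)\cdot f(y)\bigr). \]
By construction $f$ is then an isomorphism $(S,*_f) \to (G,\cdot)$, so $*_f \in \mcX_G$. Conversely, any $*\in\mcX_G$ admits some witnessing isomorphism $f$, for which $*=*_f$, so $\Phi$ is surjective.

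Next I would identify the fibers of $\Phi$. For $f,g\in B$, setting $\phi := g\circ f^{-1}$, the equality $*_f = *_g$ rewrites as
\[ \phi\bigl(f(x)\cdot f(y)\bigr) \;=\; \phi(f(x))\cdot \phi(f(y)) \quad \text{for all } x,y\in S. \]
Since $f$ is bijective, the pair $(f(x),f(y))$ ranges over all of $G\times G$, so this is equivalent to $\phi \in \Aut(G,\cdot)$. Hence the fiber of $\Phi$ through $f$ is exactly $\{\phi\circ f \mid \phi\in\Aut(G,\cdot)\}$, which has cardinality $|\Aut(G,\cdot)|$ because post-composition with the bijection $f$ is injective on maps.

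Partitioning $B$ according to these fibers then yields $n! = |B| = |\mcX_G|\cdot|\Aut(G,\cdot)|$, which rearranges to the claimed formula. No step presents a genuine obstacle; the one point that requires care is the fiber computation, and that reduces cleanly to the bijectivity of $f$.
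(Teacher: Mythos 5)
Your proof is correct and is essentially the paper's argument with the orbit--stabilizer theorem unpacked: the paper identifies $S=G$, lets $Sym(G)$ act on binary operations, and observes that $\mcX_G$ is the orbit of $\cdot$ with stabilizer $\Aut(G,\cdot)$, which is exactly your fiber computation phrased as a group action. Both routes give the same count, so no further comment is needed.
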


\begin{proof} 
  Without loss of generality we may assume that $S=G$ as sets.
  
  Consider the set $X$ of all binary operations $*:G\times G\to G$ and
  the group action \[Sym(G)\times X\to X,\quad
  (\varphi,*)\mapsto*^{\varphi},\] where $*^{\varphi}$ is defined by
  $x*^{\varphi}y=\varphi^{-1}(\varphi(x)*\varphi(y))$ for all $x,y\in
  G$, so that $\varphi:(G,*^{\varphi})\to(G,*)$ is a groupoid
  isomorphism.
  
  Now under this group action, the operation $\cdot\in X$, coming from
  the known groupoid $(G,\cdot)$, has exactly $\mcX_G$ as orbit and
  $\Aut(G,\cdot)$ as stabilizer group.  Thus the lemma follows from
  the orbit-stabilizer theorem.
\end{proof}

Now if $G_1,\dots,G_m$ are pairwise non-isomorphic groupoids (e.g. the
family of all abelian groups of a given size), then the $\mcX_{G_i}$
are pairwise disjoint. Let $\mcX:=\mcX_{G_1}\cup\dots\cup\mcX_{G_m}$,
then
\[|\mcX|=n!\left(\sum_{i=1}^m\frac{1}{|\Aut(G_i,\cdot)|}\right).\]

\subsection{Special cases}\label{lowboundsp}

\subsubsection{$\max$-semigroups}

Assume that $(S,*)$ is isomorphic to the semigroup
$(G,\cdot)=(\{0,1,\dots,n-1\},\max)$ (see Example~\ref{examax_3} for
the case $n=3$).

\begin{cor}
  Any query-algorithm which solves product-recovering with respect to
  $\mcX_G$ needs at least \[\log_2(n!)\geq n\log_2n-\frac{n}{\ln
    2}+\frac{\log_2n}{2}\] queries to the oracle on average.
\end{cor}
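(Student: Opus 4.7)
The plan is to combine Lemma~\ref{lemx_g} (to count $|\mcX_G|$), a refinement of Lemma~\ref{lemavg} tailored to the $\max$ operation (to get a lower bound with base $2$ instead of base $|S|$), and Stirling's formula (to extract the explicit asymptotic expression).

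First I would compute $|\Aut(G,\cdot)|$ where $(G,\cdot)=(\{0,\dots,n-1\},\max)$. Any automorphism $\varphi$ satisfies $\varphi(\max(x,y))=\max(\varphi(x),\varphi(y))$, and since $x\leq y$ is characterized purely algebraically by $x*y=y$, such a $\varphi$ must preserve the natural order. The only order-preserving bijection of a finite totally ordered set is the identity, hence $|\Aut(G,\cdot)|=1$. Lemma~\ref{lemx_g} then yields $|\mcX_G|=n!$.

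The next step is the tree-counting argument. The proof of Lemma~\ref{lemavg} bounds the average leaf depth by $\log_b(\text{number of leaves})$, where $b$ is the maximum number of children of an internal node. For a query-algorithm solving product-recovering with respect to $\mcX_G$, the answer to any query $x*y$ is an element of $\{x,y\}$, because every $*\in\mcX_G$ is a $\max$-semigroup operation. Thus each internal node has at most two children, so the very same averaging argument gives $E(N)\geq\log_2|\mcX_G|=\log_2(n!)$. This refinement of Lemma~\ref{lemavg} is the only nontrivial step, and it is really just an observation: the generic bound $\log_{|S|}|\mcX|$ is too weak here because $|S|=n$ overestimates the actual branching factor.

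Finally, the explicit estimate follows from the standard Stirling lower bound $n!\geq\sqrt{2\pi n}\,(n/e)^n$. Taking $\log_2$ gives
\[
\log_2(n!)\;\geq\;n\log_2 n-n\log_2 e+\tfrac{1}{2}\log_2 n+\tfrac{1}{2}\log_2(2\pi),
\]
and since $\log_2 e=1/\ln 2$ and $\log_2(2\pi)>0$, the claimed inequality $\log_2(n!)\geq n\log_2 n-n/\ln 2+(\log_2 n)/2$ drops out. The main subtlety lies in the second step (justifying base $2$ rather than base $|S|$); the other steps are essentially bookkeeping.
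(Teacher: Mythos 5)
Your proposal is correct and follows essentially the same route as the paper: $|\Aut(G,\cdot)|=1$ gives $|\mcX_G|=n!$ via Lemma~\ref{lemx_g}, the branching factor is at most $2$ because each answer to $x*y$ lies in $\{x,y\}$, and Stirling's formula yields the explicit bound. The only cosmetic difference is that the paper invokes the weaker bound $n!\geq(n/e)^n\sqrt{n}$ rather than the full $\sqrt{2\pi n}\,(n/e)^n$, which makes no difference to the result.
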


\begin{proof}
  Since $|\Aut(G,\cdot)|=1$ we have $|\mcX_G|=n!$ by
  Lemma~\ref{lemx_g}. Now in this semigroup any node of a
  query-algorithm has at most $2$ children, so by the same argument
  given in the proof of Lemma~\ref{lemavg} we see that at least
  $\log_2(n!)$ queries on average are needed. Finally the stated
  inequality is a consequence of
  $n!\geq\left(\frac{n}{e}\right)^n\sqrt{n}$, coming from Stirling's
  formula.
\end{proof}

Note that solving product-recovering reduces to the well-studied
problem of sorting an $n$-element set, where the queries to the oracle
correspond to comparisons of elements. There are several sorting
algorithms (e.g. merge sort) which use $O(n\ln n)$ comparisons in the
worst case.

\subsubsection{Abelian groups}

Now assume that $(S,*)\cong(G,\cdot)$ is an abelian group with $n$
elements (see Example~\ref{exaz_4}).

\begin{cor} 
  Suppose that $(G,\cdot)$ is generated by $r$ elements. Any
  query-algorithm which solves product-recovering with respect to
  $\mcX_G$ needs at least
  \[n-\frac{n}{\ln n}+\frac{1}{2}-r\] queries to the oracle on average.
\end{cor}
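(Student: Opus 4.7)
The plan is to apply Lemma~\ref{lemavg} with $|S|=n$ and then estimate the two pieces of the resulting quantity $\log_n|\mcX_G|$: the factorial $n!$ via Stirling and the automorphism group $|\Aut(G,\cdot)|$ via a simple generator-counting argument.

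First I would invoke Lemma~\ref{lemavg}, which gives
\[
   E(N)\;\geq\;\log_n|\mcX_G|,
\]
and Lemma~\ref{lemx_g}, which yields $|\mcX_G|=n!/|\Aut(G,\cdot)|$. Combining these,
\[
   E(N)\;\geq\;\log_n(n!)\;-\;\log_n|\Aut(G,\cdot)|.
\]

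Next I would bound the automorphism group. If $G$ is generated by $r$ elements $g_1,\dots,g_r$, then any automorphism $\varphi\in\Aut(G,\cdot)$ is completely determined by the tuple $(\varphi(g_1),\dots,\varphi(g_r))\in G^r$, since homomorphisms respect products and the $g_i$ generate. Hence $|\Aut(G,\cdot)|\leq n^r$ and therefore $\log_n|\Aut(G,\cdot)|\leq r$.

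Finally I would apply the Stirling estimate $n!\geq(n/e)^n\sqrt{n}$ already used in the previous corollary. Taking $\log_n$ gives
\[
   \log_n(n!)\;\geq\;n\log_n\!\Bigl(\frac{n}{e}\Bigr)+\tfrac{1}{2}\log_n n\;=\;n\Bigl(1-\tfrac{1}{\ln n}\Bigr)+\tfrac{1}{2}\;=\;n-\tfrac{n}{\ln n}+\tfrac{1}{2}.
\]
Putting the three bounds together produces exactly
\[
   E(N)\;\geq\;n-\tfrac{n}{\ln n}+\tfrac{1}{2}-r,
\]
as claimed. The only nontrivial step is the automorphism bound; once one observes that an endomorphism is determined by its values on a generating set, the rest is bookkeeping, so I do not anticipate a serious obstacle.
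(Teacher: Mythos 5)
Your proposal is correct and follows essentially the same route as the paper's own proof: Lemma~\ref{lemavg} combined with Lemma~\ref{lemx_g}, the bound $|\Aut(G,\cdot)|\leq n^r$ from the fact that a homomorphism is determined by its values on the $r$ generators, and the Stirling estimate $n!\geq(n/e)^n\sqrt{n}$ giving $\log_n(n!)\geq n-\frac{n}{\ln n}+\frac{1}{2}$. Nothing is missing.
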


\begin{proof} 
  Note that any endomorphism of $G$ is determined by its image on its
  $r$ generators. Hence $|\Aut(G)|\leq|\End(G)|\leq n^r$. On the other
  hand we have $n!\geq\left(\frac{n}{e}\right)^n\sqrt{n}$ from
  Stirling's formula, so that $\log_n(n!)\geq n-\frac{n}{\ln
    n}+\frac{1}{2}$. Now the result follows from Lemma~\ref{lemavg}
  and Lemma~\ref{lemx_g}
\end{proof}

Note that any abelian group of size $n$ can be generated by at most
$\log_2n$ elements, so that one can achieve $r\leq\log_2n$ in general.
Of course, if $(G,\cdot)$ is cyclic, one can set $r=1$.

\subsection{Upper bounds for abelian groups}\label{upbound}

We give an upper bound for the worst-case number of queries needed to
solve product-recovering in the case of abelian groups and present the
corresponding algorithm in the proof.

\begin{prop}\label{prop}
  Let $S$ be a set of size $n$ and 
  \[\mcX_{\text{Ab}}
  =\{*:S\times S\to S\mid (S,*)\text{ is an abelian group}\}.\] Then
  there is a query-algorithm which solves product-recovering with
  respect to $\mcX_{\text{Ab}}$ using at most $n$ queries to the
  oracle for any $*\in\mcX$.
\end{prop}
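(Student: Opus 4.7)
The plan is to construct an algorithm that recovers $*$ by growing a strictly increasing chain of subgroups $H_1 \subset H_2 \subset \cdots \subset H_t = S$, keeping, at each stage $k$, the complete restriction of $*$ to $H_k \times H_k$.

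To initialise the chain I would pick any $a_1 \in S$ and query the iterated powers $a_1 * a_1$, $a_1 * a_1^2$, $a_1 * a_1^3$, $\ldots$ in turn. Cancellation in the abelian group forces $a_1, a_1^2, \ldots, a_1^{m_1}$ to be distinct, where $m_1 := \ord(a_1)$, and makes the first collision $a_1^{m_1+1} = a_1$. Hence after exactly $m_1$ queries I have identified $H_1 := \langle a_1 \rangle$, the full multiplication on $H_1$, and the identity $e = a_1^{m_1}$.

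For the inductive step, given a proper subgroup $H_{k-1} \subset S$ with its operation already known, pick any $a_k \in S \setminus H_{k-1}$ and query the powers $a_k^2, a_k^3, \ldots$ until the first one, $a_k^{r_k}$, that lies in the already-known set $H_{k-1}$ is detected; this costs $r_k - 1$ queries. Then for each $j = 1, \ldots, r_k - 1$ I would query $a_k^j * h$ for every $h \in H_{k-1} \setminus \{e\}$, thereby filling out the coset $a_k^j H_{k-1}$ at $|H_{k-1}| - 1$ queries per coset. The resulting set $H_k := \langle H_{k-1}, a_k \rangle$ has size $r_k |H_{k-1}|$, and the observed relation $a_k^{r_k} = h_\star \in H_{k-1}$, together with the known multiplication on $H_{k-1}$ and the queried coset entries, determines $*$ on $H_k \times H_k$ via the normal-form product $(a_k^i h) * (a_k^j h') = a_k^{i+j}(h * h')$, reducing $a_k^{i+j}$ modulo $a_k^{r_k} = h_\star$ whenever $i + j \geq r_k$.

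Step $k$ therefore costs $(r_k - 1) + (r_k - 1)(|H_{k-1}| - 1) = |H_k| - |H_{k-1}|$ queries, and the telescoping total is $|H_1| + \sum_{k \geq 2}(|H_k| - |H_{k-1}|) = |H_t| = n$. The one point I expect to be mildly subtle is justifying that the recorded data really does determine $*|_{H_k \times H_k}$, not merely the underlying set of $H_k$: this follows because every element of $H_k$ has been recorded as a concrete element of $S$ under a unique normal form $a_k^j h$ (with $0 \leq j < r_k$ and $h \in H_{k-1}$) during the coset enumeration, and any product can be put into normal form and then looked up using the relation $a_k^{r_k} = h_\star$ together with the known multiplication on $H_{k-1}$.
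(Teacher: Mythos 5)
Your proposal is correct and follows essentially the same route as the paper's proof: build a chain of subgroups by first enumerating the cyclic subgroup generated by one element, then repeatedly adjoining a new element, querying its powers until one falls into the known subgroup and filling in the new cosets, with the identical telescoping count $(r_k-1)+(r_k-1)(|H_{k-1}|-1)=|H_k|-|H_{k-1}|$ summing to $n$. The normal-form argument you flag as subtle is handled in the paper by the explicit commutativity formula for $(s*b_i)*(t*b_j)$, which is the same observation.
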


\begin{proof}
  We write the query-algorithm as a list of instructions rather than
  as a tree, because this representation is more compact and readable.
  The algorithm is based on two basic subroutines.
  
  1) Start by choosing some $a=a_1\in S$ and apply the following
  algorithm.
  
  \begin{center}\begin{tabular}{|p{8cm}|}
      \hline 
      Repeat computing $a_{k+1}=a_k*a$ for $k=1,2,3,\dots$
      until $a_{k+1}=a$. \\
      \hline \end{tabular}\end{center}\vspace{1mm}
  
  After execution, $k$ is the order $\ord(a)$ of $a$, and $k$ queries
  to the oracle have been made. We further know that $a_0:=a_k$ is the
  identity element.  Also, we deduce that $a_i*a_j=a_{i+j\mod k}$ for
  $0\leq i,j<k$.
  
  Hence if $S_a=\{a_0,a_1,\dots,a_{k-1}\}$ is the subgroup generated
  by $a$, then $*$ is known on $S_a\times S_a$.
  
  2) If $S_a\neq S$ choose some $b=b_1\in S\setminus S_a$ and apply
  the following algorithm.
  
  \begin{center}\begin{tabular}{|p{8cm}|}
      \hline
      Repeat computing $b_k=b_{k-1}*b$ for $k=2,3,4,\dots$
      until $b_k\in S_a$. \\  
        
      For all $s\in S_a\setminus\{0\}$ and $0<i<k$ compute $s*b^i$. \\
      \hline \end{tabular}\end{center}\vspace{1mm}
  
  After execution we know $s*b^i$ for all $s\in S_a$ and all $0\leq
  i<k$.  Then for any $s,t\in S_a$ and $0\leq i,j<k$ we have by
  commutativity
  \[(s*b_i)*(t*b_j)=
  \begin{cases}(s*t)*b_{i+j}&\text{if }i+j<k,\\
    (s*t*b_k)*b_{i+j-k}&\text{if }i+j\geq k.
  \end{cases}\]
  This element is known, since we knew already $*$ on $S_a\times S_a$.
  It follows that $*$ is known on $S_{ab}\times S_{ab}$ where $S_{ab}$
  is the subgroup generated by $S_a$ and $b$.
  
  Let $m=|S_a|$. The number of queries to the oracle needed by the
  algorithm is
  \[k-1+(m-1)(k-1)=m(k-1)=mk-m.\] 
  Now $mk=|S_{ab}|$, so that $|S_{ab}|-|S_a|$ queries to the oracle
  have been used.
  
  3) If $S_{ab}\neq S$ choose some $c\in S\setminus S_{ab}$ and repeat
  2) with $S_a$ replaced by $S_{ab}$ and $b$ replaced by $c$.  After
  that $*$ is known on $S_{abc}\times S_{abc}$, the subgroup generated
  by $S_{ab}$ and $c$, and $|S_{abc}|-|S_{ab}|$ queries to the oracle
  have been used, etc.
  
  Writing $S_1,S_2,S_3,\dots$ for $S_a,S_{ab},S_{abc},\dots$ we
  finally reach $r$ such that $S_r=S$.  Then we have recovered the
  whole operation $*$ on $S\times S$ and we have used
  \[|S_1|+(|S_2|-|S_1|)+\dots+(|S_r|-|S_{r-1}|)=|S_r|=|S|=n\]
  queries to the oracle in total.
\end{proof}

\begin{exa}
  Consider a black box group $(S,*)$ of size $11$.  Then we know that
  $S$ is isomorphic to the cyclic group and $|\Aut(S,*)|=10$.  By
  Lemma~\ref{lemavg} and Lemma~\ref{lemx_g} we conclude that an
  algorithm solving product-recovering needs at least
  $\lceil\log_{11}3991680\rceil=7$ queries to the oracle in the worst
  case.
    
  Proposition~\ref{prop} ensures the existence of an algorithm which
  needs $11$ oracle-queries. In fact it is not hard to see that in the
  case of groups of prime order the last two queries in the above
  algorithm can be omitted, yielding an algorithm which uses $9$
  queries to the oracle.
    
  Now a computer search among all possible product recovering
  algorithms has shown that the minimal number of oracle-queries an
  algorithm needs in the worst-case is~$8$.  Such an algorithm can be
  outlined as follows:
  \begin{enumerate}
  \item choose some $a\in S$ and compute $a_{\text{sq}}=a*a$
  \item if $a_{\text{sq}}\neq a$ let $a_1=a$ and $a_2=a_{\text{sq}}$,
    otherwise let $e=a$, choose some $a_1\neq a$ and compute $a_2=a*a$
  \item compute $a_3=a_2*a_1$, $a_4=a_3*a_1$, $a_5=a_4*a_1$ and
    $a_7=a_5*a_2$
  \item if $a_{\text{sq}}\neq a$ compute $e=a_7*a_4$
  \item choose three different elements $b,c,d$ from the
    four-element-set $S\setminus\{e,a_1,a_2,a_3,a_4,a_5,a_7\}$ and
    compute $b*c$ and $b*d$
  \end{enumerate}
\end{exa}

\begin{rem}
  When we have nonunique encoding of group elements a variant of the
  above algorithm will solve product-recovering provided we are given
  a generating set and an oracle for recognizing the identity.  However,
  to check whether $b_k$ lies in $S_a$ (as in the second subroutine)
  may be a costly operation.
\end{rem}


\section{Two binary operations}\label{two}

Suppose we are given a finite set $S$ with \emph{two} hidden binary
operations $+$ and $*$, accessible via an oracle. If we know that
$(S,+,*)$ is a ring, then $(S,+)$ is an abelian group, so we can use
Proposition~\ref{prop} to recover the addition table. We now have a
ring with known addition, but unknown multiplication. This section
deals with that situation.

\begin{df}
  A \emph{black box groupoid with given addition} is a black box
  groupoid $(S,*)$ such that there is a known binary operation
  $+:S\times S\to S$ on $S$, and the following distributive laws hold
  on $S$ with respect to $+$ and $*$, i.e.
  \[\left.\begin{array}{c}
  a*(b+c)=(a*b)+(a*c)\\
  (a+b)*c=(a*c)+(b*c)
  \end{array}\right\}\text{ for all }a,b,c\in S.\]
\end{df}

In this case, all binary operations $*\in\mcX$ in question will
satisfy the distributive laws above. Suppose, for example, we know
that $(S,+,*)$ is isomorphic to some known ring $(R,+,\cdot)$. Then
the set of possible operations we are dealing with is
\begin{multline*}
\mcX_R:=\{*:S\times S\to S\mid\text{there is an}\\ \text{isomorphism
  }\varphi:(S,+,*)\to(R,+,\cdot)\}.
\end{multline*}

Its size is given in the next result.

\begin{lem}\label{lemx_r}
  For any ring $(R,+,\cdot)$ we have
  \[|\mcX_R|=\frac{|\Aut(R,+)|}{|\Aut(R,+,\cdot)|},\]
  where $\Aut(R,+)$ are the additive group automorphisms and $\Aut(R,+,\cdot)$
  are the ring automorphisms.
\end{lem}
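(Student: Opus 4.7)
The plan is to imitate the orbit-stabilizer argument from Lemma~\ref{lemx_g}, but restricted to additive automorphisms. First I would reduce to the case $S=R$ as sets with the known addition $+$ on $S$ equal to the addition of $R$. This reduction is legitimate: if $\mcX_R$ is nonempty, then there is a ring isomorphism $(S,+,*)\to(R,+,\cdot)$, which in particular gives an additive group isomorphism $(S,+)\cong(R,+)$; fixing any such isomorphism we may transport everything to $R$ without changing $|\mcX_R|$. (If $\mcX_R$ is empty the situation is not covered by the stated formula, so this case is excluded by hypothesis.)

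Once we are in this setting, I would consider the restricted group action
\[\Aut(R,+)\times X\to X,\quad(\varphi,*)\mapsto *^{\varphi},\]
on the set $X$ of all binary operations on $R$, where $*^{\varphi}$ is defined exactly as in the proof of Lemma~\ref{lemx_g} by $x*^{\varphi}y=\varphi^{-1}(\varphi(x)*\varphi(y))$. The crucial point is the following equivalence: an operation $*\in X$ lies in $\mcX_R$ if and only if $*=\cdot^{\varphi}$ for some $\varphi\in\Aut(R,+)$. Indeed, any witnessing ring isomorphism $\varphi:(R,+,*)\to(R,+,\cdot)$ must preserve $+$, hence lies in $\Aut(R,+)$, and the ring condition $\varphi(x*y)=\varphi(x)\cdot\varphi(y)$ rewrites exactly as $*=\cdot^{\varphi}$. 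Consequently $\mcX_R$ is precisely the orbit of the ring multiplication $\cdot$ under this action.

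Finally I would identify the stabilizer: $\varphi\in\Aut(R,+)$ fixes $\cdot$ iff $\varphi^{-1}(\varphi(x)\cdot\varphi(y))=x\cdot y$ for all $x,y\in R$, i.e.\ iff $\varphi$ is also multiplicative. So the stabilizer of $\cdot$ equals $\Aut(R,+,\cdot)$, and the orbit-stabilizer theorem gives $|\mcX_R|=|\Aut(R,+)|/|\Aut(R,+,\cdot)|$. The one step requiring genuine care is the initial identification $(S,+)\equiv(R,+)$; everything thereafter is a direct transcription of the proof of Lemma~\ref{lemx_g} with $Sym(G)$ replaced by the subgroup $\Aut(R,+)\leq Sym(R)$, which works precisely because the forced preservation of the known operation $+$ cuts the ambient symmetry group down from $Sym(R)$ to $\Aut(R,+)$.
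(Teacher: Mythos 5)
Your proof is correct and follows the paper's own argument exactly: identify $S=R$, act by $\Aut(R,+)$ on the set of binary operations, recognize $\mcX_R$ as the orbit of $\cdot$ and $\Aut(R,+,\cdot)$ as its stabilizer, and apply orbit-stabilizer. The paper states this in three sentences by reference to Lemma~\ref{lemx_g}; you have merely spelled out the same details.
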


\begin{proof}
  The arguments are the same as in the proof of Lemma~\ref{lemx_g}. We
  identify $S=R$ as sets and consider the action of $\Aut(R,+)$ on the
  set $X$ of all binary operations $*:R\times R\to R$. Then $\mcX_R$
  is exactly the orbit of $\cdot\in X$ and $\Aut(R,+,\cdot)$ is its
  stabilizer group.
\end{proof}

Now we specialize to the case when $(S,+,*)\cong(\F_q,+,\cdot)$ is a
field of size $q=p^r$ with $p$ prime.

\enlargethispage{.5cm}

\begin{cor}
  Any query-algorithm which solves product-recovering for a field of
  size $q=p^r$ with known addition needs at least \[r-\log_q(4r)\]
  queries to the oracle on average.
\end{cor}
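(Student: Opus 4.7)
The plan is to chain Lemma~\ref{lemavg} with Lemma~\ref{lemx_r} and then carefully estimate the size of $\mcX_{\F_q}$ through the order of the general linear group. By Lemma~\ref{lemavg}, any query-algorithm uses at least $\log_q|\mcX_{\F_q}|$ queries on average, so it suffices to prove that $|\mcX_{\F_q}| \geq q^r/(4r)$.

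By Lemma~\ref{lemx_r},
\[|\mcX_{\F_q}| = \frac{|\Aut(\F_q,+)|}{|\Aut(\F_q,+,\cdot)|}.\]
Since $(\F_q,+)$ is an elementary abelian $p$-group of rank $r$, its additive automorphism group is $GL_r(\F_p)$, whose order is the standard product $\prod_{i=0}^{r-1}(p^r-p^i) = p^{r^2}\prod_{j=1}^r(1-p^{-j})$. The ring automorphism group $\Aut(\F_q,+,\cdot)$ is the Galois group, which is cyclic of order $r$. Hence
\[|\mcX_{\F_q}| = \frac{p^{r^2}}{r}\prod_{j=1}^r(1-p^{-j}) = \frac{q^r}{r}\prod_{j=1}^r(1-p^{-j}).\]

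The one nontrivial step is to show $\prod_{j=1}^r(1-p^{-j}) \geq 1/4$ uniformly in $p$ and $r$. The product is decreasing in $r$ and in $1/p$, so the worst case is $p=2, r\to\infty$. Here I would use the Weierstrass-type inequality $\prod_j(1-a_j) \geq 1-\sum_j a_j$ applied to the tail: explicitly, $(1-2^{-1})(1-2^{-2})(1-2^{-3}) = 21/64$, while $\prod_{j\geq 4}(1-2^{-j}) \geq 1-\sum_{j\geq 4}2^{-j} = 7/8$, giving $\prod_{j=1}^\infty(1-2^{-j}) \geq 147/512 > 1/4$. This is the main obstacle; any standard elementary estimate on the Euler-type product $\prod(1-p^{-j})$ will do.

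Combining the two inequalities gives $|\mcX_{\F_q}| \geq q^r/(4r)$, hence
\[\log_q|\mcX_{\F_q}| \;\geq\; r - \log_q(4r),\]
and the corollary follows from Lemma~\ref{lemavg}. The rest of the argument is pure bookkeeping; the only place requiring real care is fixing the constant $4$ in $\log_q(4r)$, which is determined precisely by the bound on the infinite $q$-Pochhammer-type product.
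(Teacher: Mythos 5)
Your proof is correct and follows essentially the same route as the paper: Lemma~\ref{lemavg} plus Lemma~\ref{lemx_r}, with $|\Aut(\F_q,+)|=|GL_r(\F_p)|=q^r\prod_{j=1}^r(1-p^{-j})$ and $|\Aut(\F_q,+,\cdot)|=r$ by Galois theory, reducing everything to the bound $\prod_{j\geq 1}(1-2^{-j})>1/4$. The only difference is how that last product is estimated---the paper uses the termwise inequality $1-2^{-i}>(1/2)^{2^{-(i-1)}}$ while you compute the first three factors and apply the Weierstrass inequality to the tail---but both are routine and yield the same constant.
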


\begin{proof}
  The automorphisms $\Aut(\F_q,+)$ are exactly the vector space
  automorphisms of the $\F_p$-vector space $(\F_p)^r$, so that
  \begin{multline*}
    |\Aut(\F_q,+)|=(q-1)(q-p)\cdots(q-p^{r-1})\\
    =q^r\left(1-\frac{1}{p}\right)\left(1-\frac{1}{p^2}\right)
      \cdots\left(1-\frac{1}{p^r}\right),
  \end{multline*}
  where $(1-\frac{1}{p})\cdots(1-\frac{1}{p^r})> 
  \prod\limits_{i\geq 1}(1-\frac{1}{2^i})>
  \prod\limits_{i\geq 0}(\frac{1}{2})^{\frac{1}{2^i}}=\frac{1}{4}$.
  On the other hand, $|\Aut(\F_q,+,\cdot)|=|\Aut(\F_{p^r}/\F_p)|=
  [\F_{p^r}:\F_p]=r$, by basic Galois theory. Hence
  \[\frac{|\Aut(\F_q,+)|}{|\Aut(\F_q,+,\cdot)|}\geq\frac{q^r}{4r}\]
  and the result follows from Lemma~\ref{lemavg} and Lemma~\ref{lemx_r}.
\end{proof}

We now give an upper bound for the number of queries needed to solve
product-recovering for rings $(S,+,*)$ of size $|S|=n$ with given
addition.  For this it suffices to ask the oracle for all products
$a*b$ of elements $a,b\in A$, where $A$ is a generating set for the
abelian group $(S,+)$. Then if $x,y\in S$, we can write
$x=a_1+\dots+a_k$ and $y=b_1+\dots+b_l$ with $a_i,b_j\in A$ for all
$i,j$, and thus
\[x*y=(a_1+\dots+a_k)*(b_1+\dots+b_l)=\sum_{i=1}^k\sum_{j=1}^la_i*b_j;\]
now, since all $a_i*b_j$ are known and the addition is known, we also know
$x*y$.

Because $(S,+)$ can be generated by at most $\log_2n$ elements, we
thus have $(\log_2n)^2$ as an upper bound. Together with
Proposition~\ref{prop} this proves:

\begin{prop}
  If $(S,+,*)$ is a ring of size $n$, then there is a query-algorithm
  which solves product-recovering for both operations $+$ and $*$ with
  at most
  \[n+(\log_2n)^2\] queries to the oracle in the worst case.
\end{prop}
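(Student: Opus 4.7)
The plan is to combine Proposition~\ref{prop} with the distributivity argument sketched in the paragraph preceding the statement, splitting the work into two independent phases.

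\emph{Phase 1 (recover $+$).} Since $(S,+,*)$ is a ring, $(S,+)$ is an abelian group of size $n$, so I would invoke Proposition~\ref{prop} directly on $(S,+)$ to recover the full addition table using at most $n$ queries to the $+$-oracle.

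\emph{Phase 2 (recover $*$).} With $+$ now fully known, I would compute a minimum-size generating set $A\subseteq S$ for $(S,+)$; this is a purely internal computation on the recovered addition table and costs no further oracle queries. By the structure theorem for finite abelian groups, $(S,+)\cong\Z/d_1\times\cdots\times\Z/d_k$ with each $d_i\geq 2$, so $|A|\leq k\leq\log_2 n$. I would then query the $*$-oracle on every pair in $A\times A$, using at most $(\log_2 n)^2$ additional queries. To verify that $*$ is then fully determined, for arbitrary $x,y\in S$ write $x=\sum_i a_i$ and $y=\sum_j b_j$ with $a_i,b_j\in A$ (possible since $A$ generates $(S,+)$ and $+$ is known on all of $S\times S$), and apply distributivity to obtain
\[x*y=\sum_{i,j}a_i*b_j,\]
which is computable from already recovered data without further calls.

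Summing the two phases gives the claimed bound of $n+(\log_2 n)^2$ oracle queries. I do not anticipate a real obstacle: this proposition essentially packages a routine reduction, and the one nontrivial ingredient is the bound $|A|\leq\log_2 n$ on the number of generators of a finite abelian group of order $n$, which is a standard consequence of the structure theorem.
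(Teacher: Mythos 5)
Your proposal is correct and follows essentially the same route as the paper: recover $+$ via Proposition~\ref{prop} with at most $n$ queries, then query $*$ on all pairs from a generating set $A$ of $(S,+)$ with $|A|\leq\log_2 n$ and extend by distributivity. The only point you make explicit that the paper leaves implicit is the justification $|A|\leq\log_2 n$ via the structure theorem, which is a welcome but minor addition.
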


\end{document}